\theoremstyle{plain}
\newtheorem{theorem}{Theorem}
\newtheorem{fact}{Fact}
\newtheorem{observation}{Observation}
\theoremstyle{definition}
\newtheorem{example}{Example}
\newcommand{\cO}{\mathcal{O}}
\newcommand{\LCPF}{$\textsf{LPCF}_{k'}$\xspace}
\newcommand{\LPal}{$\textsf{LPALCF}$\xspace}
\newcommand{\sqms}{\textsf{SQMS}}
\newcommand{\per}{\textsf{per}}
\newcommand{\ST}{\textsf{ST}}
\newcommand{\GST}{\textsf{GST}\xspace}
\newcommand{\aGST}{\textsf{aGST}\xspace}
\newcommand{\MS}{\textsf{MS}}
\newcommand{\ie}{\textit{i.e., }}
\def\smallunderbrace#1{\mathop{\vtop{\m@th\ialign{##\crcr
   $\hfil\displaystyle{#1}\hfil$\crcr
   \noalign{\kern3\p@\nointerlineskip}%
   \tiny\upbracefill\crcr\noalign{\kern3\p@}}}}\limits}
\def\mathcolor#1#{\@mathcolor{#1}}
\def\@mathcolor#1#2#3{%
  \protect\leavevmode
  \begingroup
    \color#1{#2}#3%
  \endgroup
}
\definecolor{darkgreen}{rgb}{0, .6, 0}
\title{Longest Property-Preserved Common Factor}
\author[1]{Lorraine A.K Ayad}
\author[2]{Giulia Bernardini}
\author[3]{Roberto Grossi}
\author[4]{Costas S. Iliopoulos}
\author[5]{Nadia Pisanti}
\author[6]{Solon P. Pissis}
\author[7]{Giovanna Rosone}
\affil[1]{Department of Informatics, King's College London, London, UK\\
\texttt{lorraine.ayad@kcl.ac.uk}}
\affil[2]{Department of Informatics, Systems and Communication (DISCo), University of Milan-Bicocca, Italy\\
\texttt{giulia.bernardini@unimib.it}}
\affil[3]{Department of Computer Science, University of Pisa, Italy and ERABLE Team, INRIA, France\\
\texttt{grossi@di.unipi.it}}
\affil[4]{Department of Informatics, King's College London, London, UK\\ 
\texttt{c.iliopoulos@kcl.ac.uk}}
\affil[5]{Department of Computer Science, University of Pisa, Italy and ERABLE Team, INRIA, France\\
\texttt{pisanti@di.unipi.it}}
\affil[6]{Department of Informatics, King's College London, London, UK\\ 
\texttt{solon.pissis@kcl.ac.uk}}
\affil[7]{Department of Computer Science, University of Pisa, Italy\\
\texttt{giovanna.rosone@unipi.it}}
\begin{document}
\date{}
\maketitle
\begin{abstract}
In this paper we introduce a new family of string processing problems.
We are given two or more strings and we are asked to compute a factor common to all strings that preserves a specific property and has maximal length. Here we consider three fundamental string properties: square-free factors, periodic factors, and palindromic factors under three different settings, one per property. In the first setting, we are given a string $x$ and we are asked to construct a data structure over $x$ answering the following type of on-line queries: given string $y$, find a longest square-free factor common to $x$ and $y$. In the second setting, we are given $k$ strings and an integer $1 < k'\leq k$ and we are asked to find a longest periodic factor common to at least $k'$ strings. In the third setting, we are given two strings and we are asked to find a longest palindromic factor common to the two strings.
We present linear-time solutions for all settings. We anticipate that our paradigm can be extended to other string properties or settings. 
\end{abstract}

\section{Introduction}

In the longest common factor problem, also known as longest common substring problem, we are given two strings $x$ and $y$, each of length at most $n$, and we are asked to find a maximal-length string occurring in both $x$ and $y$. This is a classical and well-studied problem in computer science arising out of different practical scenarios. It can be solved in $\cO(n)$ time and space~\cite{ChiHui1992,DBLP:books/cu/Gusfield1997} (see also~\cite{DBLP:conf/esa/KociumakaSV14,DBLP:conf/cpm/StarikovskayaV13}). Recently, the same problem has been extensively studied under distance metrics; that is, the sought factors (one from $x$ and one from $y$) must be at distance at most $k$ and have maximal length~\cite{DBLP:conf/cpm/Charalampopoulos18,DBLP:journals/jcb/ThankachanAA16,DBLP:conf/recomb/ThankachanACA18,DBLP:journals/corr/abs-1801-04425,MaxHamming,jda08} (and references therein).

In this paper we initiate a new related line of research. We are given two or more strings and our goal is to compute a {\em factor} common to all strings that preserves a specific {\em property} and has maximal length. An analogous line of research was introduced in~\cite{Chowdhury}. It focuses on computing a {\em subsequence} (rather than a factor) common to all strings that preserves a specific property and has maximal length. Specifically, in~\cite{Chowdhury,BAE,INENAGA}, the authors considered computing a longest common palindromic subsequence and in~\cite{inoue_et_al} computing a longest common square subsequence.

We consider three fundamental string properties: {\em square-free} factors, {\em periodic}, and {\em palindromic} factors~\cite{lothaire_2005} under three different settings, one per property. In the first setting, we are given a string $x$ and we are asked to construct a data structure over $x$ answering the following type of on-line queries: given string $y$, find a longest square-free factor common to $x$ and $y$. In the second setting, we are given $k$ strings and an integer $1 < k'\leq k$ and we are asked to find a longest periodic factor common to at least $k'$ strings. 
In the third setting, we are given two strings and we are asked to find a longest palindromic factor common to the two strings.
We present linear-time solutions for all settings. We anticipate that our paradigm can be extended to other string properties or settings.

\subsection{Definitions and Notation}

An \textit{alphabet} $\Sigma$ is a non-empty finite ordered set of letters of size $\sigma=|\Sigma|$. 
In this work we consider that $\sigma = \cO(1)$ or that $\Sigma$ is a linearly-sortable integer alphabet. 
A \textit{string} $x$ on an alphabet $\Sigma$ is a sequence of elements of $\Sigma$. The set of all strings on an alphabet $\Sigma$, including the \textit{empty string} $\varepsilon$ of length $0$, is denoted by $\Sigma^*$. For any string $x$, we denote by $x[i..j]$ the \textit{substring} (sometimes called \textit{factor}) of $x$ that starts at position $i$ and ends at position $j$. In particular, $x[0 .. j]$ is the \textit{prefix} of $x$ that ends at position $j$, and $x[i..|x|-1]$ is the \textit{suffix} of $x$ that starts at position $i$, where $|x|$ denotes the \textit{length} of $x$. A string $uu$, $u \in \Sigma^*$, is called a {\em square}. A {\em square-free} string is a string that does not contain a square as a factor.

A \emph{period} of $x[0..|x|-1]$ is a positive integer $p$ such that $x[i]=x[i+p]$ holds for all $0 \leq i < |x|-p$. The smallest period of $x$ is denoted by $\per(x)$.
String $u$ is called {\em periodic} if and only if $\per(u) \leq |u|/2$.
A \emph{run} of string $x$ is an interval $[i,j]$ such that for the smallest period $p=\per(x[i..j])$ it holds that $2p \leq j-i+1$ and the periodicity cannot be extended to the left or right, \ie $i=0$ or $x[i-1] \neq x[i+p-1]$, and, $j=|x|-1$ or $x[j-p+1] \neq x[j+1]$. 

We denote the {\em reversal} of $x$ by string $x^R$, i.e. $x^R=x[|x|-1]x[|x|-2] \ldots x[0]$. 
A string $p$ is said to be a \emph{palindrome} if and only if $p=p^R$. If factor $x[i..j]$, $0 \leq i \leq j \leq n-1$, of string $x$ of length $n$ is a palindrome, then $\frac{i+j}{2}$ 
is the \emph{center} of $x[i..j]$ in $x$ and $\frac{j-i+1}{2}$ is the {\em radius} of $x[i.. j]$. 
In other words, a palindrome is a string that reads the same forward and backward, i.e. a string $p$ is a palindrome if $p = y a y^R$ where $y$ is a string, $y^R$ is the reversal of $y$ 
and $a$ is either a single letter or the empty string. Moreover, $x[i.. j]$ is called a {\em palindromic factor} of $x$. 
It is said to be a \emph{maximal palindrome} if there is no other palindrome in $x$ with center $\frac{i+j}{2}$ and larger radius. 
Hence $x$ has exactly $2n-1$ maximal palindromes. A maximal palindrome $p$ of $x$ can be encoded as a pair $(c,r)$, where $c$ is the center of $p$ in $x$ and $r$ is the radius of $p$.

\subsection{Algorithmic Toolbox}

The maximum number of runs in a string of length $n$ is less than $n$~\cite{BannaiTomohiroInenagaNakashimaTakedaTsuruta2017}, and, moreover, all runs can be computed in $\cO(n)$ time~\cite{KolpakovKucherov1999,BannaiTomohiroInenagaNakashimaTakedaTsuruta2017}.

The \textit{suffix tree} $\ST(x)$ of a non-empty string $x$ of length $n$ is a compact trie representing all suffixes of $x$. $\ST(x)$ can be constructed in $\cO(n)$ time~\cite{farach1997optimal}. We can analogously define and construct the \textit{generalised suffix tree} $\GST(x_0,x_1,\ldots,x_{k-1})$ for a set of $k$ strings. We assume the reader is familiar with these data structures.

The matching statistics capture all matches between two strings $x$ and $y$~\cite{Chang1994}. 
More formally, the {\em matching statistics} of a string $y[0..|y|-1]$ with respect to a string $x$ is an array $\MS_y[0..|y|-1]$, where $\MS_y[i]$ is a pair $(\ell_i, p_i)$ such that
(i) $y[i..i + \ell_i-1]$ is the longest prefix of $y[i..|y|-1]$ that is a factor of $x$; and (ii) $x[p_i..p_i + \ell_i-1] = y[i..i + \ell_i-1]$. Matching statistics can be computed in $\cO(|y|)$ time for $\sigma = \cO(1)$ by using $\ST(x)$ \cite{DBLP:books/cu/Gusfield1997,DBLP:conf/spire/BelazzouguiC14,tcs09}.

Given a rooted tree $T$ with $n$ leaves coloured from $0$ to $k-1$, $1 < k \leq n$, the \emph{colour set size} problem is finding, for each internal node $u$ of $T$, the number of different leaf colours in the subtree rooted at $u$. In \cite{ChiHui1992}, the authors present an $\cO(n)$-time solution to this problem. 

In the \textit{weighted ancestor} problem, introduced in~\cite{DBLP:conf/cpm/FarachM96}, we consider a rooted tree $T$ with an integer weight function $\mu$ defined on the nodes. We require that the weight of the root is zero and the weight of any other node is strictly larger than the weight of its parent. A weighted ancestor query, given a node $v$ and an integer value $\ell\le \mu(v)$, asks for the highest ancestor $u$ of $v$ such that $\mu(u)\ge \ell$, \ie such an ancestor $u$ that $\mu(u)\ge \ell$ and $\mu(u)$ is the smallest possible. When $T$ is the suffix tree of a string $x$ of length $n$, we can locate the locus of any factor of $x[i..j]$ using a weighted ancestor query. We define the weight of a node of the suffix tree as the length of the string it represents. Thus a weighted ancestor query can be used for the terminal node corresponding to $x[i..n-1]$ to create (if necessary) and mark the node that corresponds to $x[i..j]$. Given a collection $Q$ of weighted ancestor queries on a weighted tree $T$ on $n$ nodes with integer weights up to $n^{\cO(1)}$, all the queries in $Q$ can be answered {\em off-line} in $\cO(n+|Q|)$ time~\cite{DBLP:journals/corr/BartonKLPR17}.
    
\section{Square-Free-Preserved Matching Statistics}

In this section, we introduce the square-free-preserved matching statistics problem and provide a linear-time solution. In the {\em square-free-preserved matching statistics} problem we are given a string $x$ of length $n$ and we are asked to construct a data structure over $x$ answering the following type of on-line queries: given string $y$, find the longest square-free prefix of $y[i..|y|-1]$ that is a factor of $x$, for all $0 \leq i < |y|-1$. 
(For related work see~\cite{DBLP:conf/spire/DumitranMN15}.)
We represent the answer using an integer array $\sqms_y[0..|y|-1]$ of lengths, but we can trivially modify our algorithm to report the actual factors. It should be clear that a maximum element in $\sqms$ gives the length of some longest square-free factor common to $x$ and $y$. 

\emph{Construction.} Our data structure over string $x$ consists of the following:
\begin{itemize}
\item An integer array $L_x[0..n-1]$, where $L_x[i]$ stores the length of the longest square-free factor starting at position $i$ of string $x$.
\item The suffix tree $\ST(x)$ of string $x$.
\end{itemize}

The idea for constructing array $L_x$ efficiently is based on the following crucial observation. 

\begin{observation}
\label{obs:sq}
If $x[i..n-1]$ contains a square then $L_x[i]+1$, for all $0 \leq i < n$, is the length of the {\em shortest prefix} of $x[i..n-1]$ (factor $f$) containing a square. In fact, the square is a suffix of $f$, otherwise $f$ would not have been the shortest.
If $x[i..n-1]$ does not contain a square then $L_x[i]=n-i$.
\end{observation}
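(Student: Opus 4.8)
The plan is to exploit the single structural fact that square-freeness is closed under taking prefixes, and then to read off both claims as immediate consequences. First I would record that the factors of $x$ that start at position $i$ are exactly the prefixes $x[i..j]$, $i \le j \le n-1$, of the suffix $x[i..n-1]$; hence $L_x[i]$ is precisely the length of the \emph{longest square-free prefix} of that suffix. The crucial monotonicity step is this: if a factor $x[i..j]$ contains a square, then so does every longer factor $x[i..j']$ with $j' \ge j$ starting at the same position, since the square is still present as a factor. Equivalently, if $x[i..j']$ is square-free then so is each of its prefixes $x[i..j]$. Thus, listing the prefixes of $x[i..n-1]$ in order of increasing length, the square-free ones form an initial segment, and there is a well-defined threshold at which a square first appears.

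For the case in which $x[i..n-1]$ contains no square at all, every prefix is square-free, so the longest square-free prefix is the whole suffix $x[i..n-1]$, of length $n-i$; this gives $L_x[i]=n-i$ directly. For the case in which $x[i..n-1]$ does contain a square, let $f$ be the shortest prefix of $x[i..n-1]$ that contains a square, say $f=x[i..i+m-1]$ with $m$ minimal. By the monotonicity above, the strictly shorter prefix $x[i..i+m-2]$ is square-free and is in fact the longest square-free prefix, whence $L_x[i]=m-1$ and therefore $L_x[i]+1=m=|f|$, as claimed.

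It remains to justify that the square occurs as a \emph{suffix} of $f$, and here the only point that needs care is the use of minimality. Writing the square as $x[s..t]$ with $i\le s\le t\le i+m-1$, I would argue by contradiction: if $t<i+m-1$, then $x[s..t]$ would already be a square contained in the shorter prefix $x[i..i+m-2]$, contradicting both the minimality of $m$ and the fact that $x[i..i+m-2]$ is square-free. Hence $t=i+m-1$, so the square ends at the last position of $f$ and is therefore a suffix $x[s..i+m-1]$ of $f$. I do not expect any genuine obstacle here: the whole observation reduces to prefix-closure of square-freeness together with a one-line minimality argument, and no case analysis beyond the trivial split on whether a square occurs at all is required.
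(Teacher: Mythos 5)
Your proposal is correct and follows exactly the reasoning the paper intends: the paper states this as an observation with only the one-line minimality remark for the suffix claim, and your argument is simply a careful expansion of that same idea (prefix-closure of square-freeness giving a threshold, plus the minimality contradiction for why the square must end at the last position of $f$). No gaps; nothing done differently.
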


We thus shift our focus to computing the shortest such prefixes.
We start by considering the runs of $x$. Specifically, we consider squares in $x$ observing that a run $[\ell,r]$ with period $p$ contains $r-\ell - 2p + 2$ squares of length $2p$ with the leftmost one starting at position $\ell$. Let $r'\!=\!\ell\!+\!2p\!-\!1$ denote the ending position of the leftmost such square of the run. In order to find, for all $i$'s, the shortest prefix of $x[i..n-1]$ containing a square $s$, and thus compute $L_x[i]$, we have two cases:

\begin{enumerate}
\item \label{item:case1}
$s$ is part of a run $[\ell,r]$ in $x$ that starts {\em after} $i$. In particular, $s=x[\ell..r']$ such that $r' \leq r$, $\ell>i$, and $r'$ is minimal. In this case the shortest factor has length $\ell+2p-i$; we store this value in an integer array $C[0..n-1]$. If no run starts after position $i$ we set $C[i]=\infty$. 
To compute $C$, after computing in $\cO(n)$ time all the runs of $x$ with their $p$ and $r'$~\cite{KolpakovKucherov1999,BannaiTomohiroInenagaNakashimaTakedaTsuruta2017}, we sort them by $r'$. A right-to-left scan after this sorting associates to $i$ the closest $r'$ with $\ell>i$.

\item \label{item:case2}
$s$ is part of a run $[\ell,r]$ in $x$ and $i\!\in\![\ell,r]$. This implies that if $i\!\leq\!r\!-\!2p\!+\!1$ then a square {\em starts at} $i$ and we store the length of the shortest such square in an integer array $S[0..n-1]$. If no square starts at position $i$ we set $S[i]=\infty$. Array $S$ can be constructed in $\cO(n)$ time by applying the algorithm of~\cite{DUVAL2004229}.
\end{enumerate}

Since we do not know which of the two cases holds, we compute both $C$ and $S$. By Observation~\ref{obs:sq}, if $C[i]=S[i]=\infty$ ($x[i..n-1]$ does not contain a square) we set $L_x[i] =n-i$; otherwise ($x[i..n-1]$ contains a square) we set $L_x[i]=\min\{C[i],S[i]\}-1$. 

Finally, we build the suffix tree $\ST(x)$ of string $x$ in $\cO(n)$ time~\cite{farach1997optimal}. This completes our construction.

\emph{Querying.} We rely on the following fact for answering the queries efficiently. 
\begin{fact}
\label{fct:sqf}
Every factor of a square-free string is square-free.
\end{fact}

Let string $y$ be an on-line query. Using $\ST(x)$, we compute the matching statistics $\MS_y$ of $y$ with respect to $x$. For each $j \in [0,|y|-1]$, $\MS_y[j]=(\ell_i,i)$ indicates that $x[i..i+\ell_i-1] = y[j..j+\ell_i-1]$. This computation can be done in $\cO(|y|)$ time~\cite{DBLP:books/cu/Gusfield1997,DBLP:conf/spire/BelazzouguiC14}. By applying Fact~\ref{fct:sqf}, we can answer any query $y$ in $\cO(|y|)$ time for $\sigma = \cO(1)$ by setting $\sqms_y[j]=\min\{\ell_i, L_x[i]\}$, for all $0 \leq j \leq |y| - 1$.

We arrive at the following result.

\begin{theorem}\label{th:sqms}
Given a string $x$ of length $n$ over an alphabet of size $\sigma = \cO(1)$, we can construct a data structure of size $\cO(n)$ in time $\cO(n)$, answering $\sqms_y$ on-line queries in $\cO(|y|)$ time.
\end{theorem}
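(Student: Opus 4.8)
The plan is to verify the three claims of \Cref{th:sqms}—space $\cO(n)$, construction time $\cO(n)$, and query time $\cO(|y|)$—by auditing the two procedures described above and, crucially, by proving that the array $L_x$ stores exactly the longest square-free factor lengths and that the query formula is correct. The space and time budgets then follow almost immediately from the cited linear-time primitives, so the substance of the proof lies in the two correctness arguments.

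For the construction, I would first confirm that $C$ and $S$ meet the stated bounds: the runs of $x$ together with their periods $p$ and leftmost-square endpoints $r'$ are obtained in $\cO(n)$ time, and since there are fewer than $n$ runs with all relevant values bounded by $n$, sorting by $r'$ is a counting sort and the subsequent right-to-left sweep that assigns to each $i$ the nearest eligible $r'$ is linear, giving $C$ in $\cO(n)$; array $S$ is produced in $\cO(n)$ by the algorithm of~\cite{DUVAL2004229}; and combining them into $L_x$ and building $\ST(x)$ are each linear. The correctness of $L_x$ is where the real work is. By \Cref{obs:sq} it suffices to show that $\min\{C[i],S[i]\}$ equals the length of the shortest prefix $f=x[i..e]$ of $x[i..n-1]$ that contains a square (and that $C[i]=S[i]=\infty$ exactly when no such prefix exists). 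Both $C[i]$ and $S[i]$, when finite, are lengths of prefixes of $x[i..n-1]$ that do contain a square, so $|f|\le\min\{C[i],S[i]\}$ is immediate. For the reverse inequality I would use that the witnessing square is a suffix $s=x[a..e]$ of $f$ (again \Cref{obs:sq}), which may be taken primitively rooted since otherwise a strictly shorter square would be a suffix of a strictly shorter prefix, contradicting minimality; such an $s$ of period $p$ lies inside a run $[\ell,r]$ of smallest period $p$ with $\ell\le a$ and $e\le r$. The case split is then exactly the two cases of the construction: if $\ell>i$ then the leftmost square of this run yields a prefix of $f$ of length $\ell+2p-i$, so $C[i]\le|f|$; and if $\ell\le i$ then, since every position of $[\ell,r-2p+1]$ starts a length-$2p$ square and $\ell\le i\le a\le r-2p+1$, a square of length $2p$ starts at $i$ itself, so $S[i]\le 2p\le|f|$. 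Either way $\min\{C[i],S[i]\}\le|f|$, giving equality and hence $L_x[i]=\min\{C[i],S[i]\}-1$.

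For the query, write $\MS_y[j]=(\ell_j,p_j)$ so that $y[j..j+\ell_j-1]=x[p_j..p_j+\ell_j-1]$ is the longest prefix of $y[j..|y|-1]$ occurring in $x$. The prefixes of $y[j..|y|-1]$ that are factors of $x$ are precisely those of length $t\le\ell_j$, and each such prefix coincides with $x[p_j..p_j+t-1]$. By \Cref{fct:sqf}, a prefix of $x[p_j..]$ is square-free if and only if its length is at most $L_x[p_j]$, so the longest square-free prefix of $y[j..|y|-1]$ that is a factor of $x$ has length $\min\{\ell_j,L_x[p_j]\}$, which is exactly the value assigned to $\sqms_y[j]$. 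Since $\MS_y$ is computed in $\cO(|y|)$ time for $\sigma=\cO(1)$ and each entry of $\sqms_y$ is then obtained in $\cO(1)$, the query runs in $\cO(|y|)$ time, completing all three claims.

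I expect the main obstacle to be the exhaustiveness of the two construction cases, namely that the shortest square-containing prefix is always detected by either $C$ or $S$. This hinges on two structural facts that I would state and apply carefully: that every occurrence of a primitively rooted square is contained in a run of the same smallest period, and that within a run of period $p$ the starting positions of length-$2p$ squares form the contiguous interval $[\ell,r-2p+1]$, so that a square starting at $a\ge i$ inside a run with $\ell\le i$ forces a square to start at $i$ as well. A secondary point to verify is that the algorithm of~\cite{DUVAL2004229} indeed reports, for every position, the \emph{shortest} square starting there, so that $S$ carries the meaning required above.
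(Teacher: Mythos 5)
Your proposal is correct and follows essentially the same route as the paper's proof: it verifies the same construction (correctness of $L_x$ via the case split between $C$ and $S$, using \cref{obs:sq} and the structure of runs) and the same query formula $\sqms_y[j]=\min\{\ell_j,L_x[p_j]\}$ via \cref{fct:sqf} and matching statistics. You supply somewhat more detail than the paper does—in particular the reduction to a primitively rooted suffix square contained in a run, and a direct per-position argument for the query values where the paper argues via an occurrence of a longest common square-free factor—but these are refinements of the same argument rather than a different approach.
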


\begin{proof}
The time complexity of our algorithm follows from the above discussion.

We next show the correctness of our algorithm. Let us first show the correctness of computing array $L_x$.
The square contained in the shortest prefix of $x[i..n-1]$ (containing a square) starts by definition either at $i$ or after $i$. If it starts at $i$ this is correctly computed by the algorithm of~\cite{DUVAL2004229} which assigns the length of the shortest such square in $S[i]$. If it starts after $i$ it must be the leftmost square of another run by the runs definition. $C[i]$ stores the length of the shortest prefix containing such a square. Then by Observation~\ref{obs:sq}, $L_x[i]$ is computed correctly. 

It suffices to show that, if $w$ is the longest square-free substring common to $x$ and $y$ occurring at position $i_x$ in $x$ and at position $i_y$ in $y$, then 
(i) $\MS_y[i_y]=(\ell,i_x)$ with $\ell\geq |w|$ and $x[i_x..i_x+\ell-1]=y[i_y..i_y+\ell-1]$; 
(ii) $w$ is a prefix of $x[i_x..i_x+L_x[i_x]-1]$; and 
(iii) $\sqms_y[i_y]=|w|$. 
Case (i) directly follows from the correctness of the matching statistics algorithm.
For Case (ii), since $w$ occurs at $i_x$ and $w$ is square-free, $L_x[i_x] \geq |w|$.
For Case (iii), since $w$ is square-free we have to show that $|w| = \min\{\ell_i, L_x[i]\}$.
We know from (i) that $\ell \geq |w|$ and from (ii) that $L_x[i_x]\geq |w|$.
If $\min\{\ell_i, L_x[i]\}=\ell$, then $w$ cannot be extended because the possibly longer than $|w|$ square-free string occurring at $i_x$ does not occur in $y$, and in this case $|w| =\ell$. Otherwise, if 
$\min\{\ell_i, L_x[i]\}=L_x[i_x]$ then $w$ cannot be extended because it is no longer square-free, and in this case $|w| = L_x[i_x]$. Hence we conclude that $\sqms_y[i_y]=|w|$. The statement follows. 
\end{proof}

The following example provides a complete overview of the workings of our algorithm.

\begin{example}
Let $x = \texttt{aababaababb}$ and $y = \texttt{babababbaaab}$. The length of a longest common square-free factor is 3, and the factors are $\texttt{bab}$ and $\texttt{aba}$.

\centering
\begin{tabular}{ccccccccccccccc} \\\cline{1-12}
$i$ & 0 & 1 & 2 & 3 & 4 & 5 & 6 & 7 & 8 & 9 & 10 \\ \cline{1-12}
$x[i]$ & \texttt{a} & \texttt{a} & \texttt{b} & \texttt{a} & \texttt{b} & \texttt{a} & \texttt{a} & \texttt{b} & \texttt{a} & \texttt{b} & \texttt{b} \\
$C[i]$ & 5 & 6 & 5 & 4 & 3 & 5 & 5 & 4 & 3 & $\infty$ & $\infty$\\
$S[i]$ & 2 & 4 & 4 & 6 & $\infty$ & 2 & 4 & $\infty$ & $\infty$ & 2 & $\infty$ \\
$L_x[i]$ & 1 & 3 & 3 & 3 & 2 & 1 & 3 & 3 & 2 & 1 & 1 \\ \hline
$j$ & $0$ & $1$ & $2$ & $3$ & $4$ & $5$ & $6$ & $7$ & $8$ & $9$ & $10$ & $11$  \\  \hline
$y[j]$ & \texttt{b} & \texttt{a} & \texttt{b} & \texttt{a} & \texttt{b} & \texttt{a} & \texttt{b} & \texttt{b} & \texttt{a} & \texttt{a} & \texttt{a} & \texttt{b}\\
$\MS_y[j]$ & (4,2) & (5,1) & (4,2) & (5,6) & (4,7) & (3,8) &(2,9) & (3,4) & (2,0) & (3,0) & (2,1) & (1,2) \\
$\sqms_y[j]$ & 3 & 3 & 3 & 3 & 3 & 2 & 1 & 2 & 1 & 1 & 2 & 1 
\end{tabular}
\end{example}

\section{Longest Periodic-Preserved Common Factor}

In this section, we introduce the longest periodic-preserved common factor problem and provide a linear-time solution. In the {\em longest periodic-preserved common factor} problem, we are given $k \ge 2$ strings $x_0,x_1,\dots, x_{k-1}$ of total length $N$ and an integer $1 < k'\le k$, and we are asked to find a longest periodic factor common to at least $k'$ strings. In what follows we present two different algorithms to solve this problem. We represent the answer \LCPF by the length of a longest factor, but we can trivially modify our algorithms to report an actual factor. Our first algorithm, denoted by {\sc lPcf}, works as follows.

\begin{enumerate}
\item Compute the runs of string $x_j$, for all $0 \le j < k$.
\item Construct the generalised suffix tree $\GST(x_0,x_1,\ldots,x_{k-1})$ of $x_0,x_1,\ldots,x_{k-1}$. 
\item For each string $x_j$ and for each run $[\ell,r]$ with period $p_\ell$ of $x_j$, augment \GST with the explicit node spelling $x_j[\ell..r]$, decorate it with $p_\ell$, and mark it as a \emph{candidate} node. This can be done as follows: for each run $[\ell,r]$ of $x_j$, for all $0\!\le\!j\!<\!k$, find the leaf corresponding to $x_{j}[\ell..|x_j|\!\!-\!\!1] $ and answer the weighted ancestor query in \GST with weight $r\!-\!\ell\!+\!1$. Moreover, mark as candidates all {\em explicit} nodes spelling a prefix of length $d$ of any run $[\ell,r]$ with $2p_\ell\leq d$. 
\item Mark as \emph{good} the nodes of the tree having at least $k'$ different colours on the leaves of the subtree rooted there. Let \aGST be this augmented tree. 
\item Return as \LCPF the string depth of a candidate node in \aGST which is also a good node, and that has maximal string depth (if any, otherwise return 0). 
\end{enumerate}
\begin{theorem}\label{the:LCPF}
Given $k$ strings of total length $N$ on alphabet $\Sigma = \{1,\ldots,N^{\cO(1)}\}$, and an integer $1 < k'\leq k$, algorithm {\sc lPcf} returns \LCPF in time $\cO(N)$.
\end{theorem}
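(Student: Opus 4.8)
The plan is to prove two things: that the value returned equals \LCPF, and that the whole computation runs in $\cO(N)$ time. For correctness I would isolate the combinatorial heart as a structural lemma: \emph{a longest periodic factor common to at least $k'$ strings can always be taken to be a prefix of some run}. This is exactly what legitimises Step~3, which only ever marks (prefixes of) runs as candidates; once the lemma is in place, the rest of the correctness argument reduces to checking that the optimal run prefix is captured by a marked node and that no candidate-and-good node can lie about being periodic or common.

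The structural lemma is where I expect the real difficulty to be. Let $w$ be a longest periodic factor common to at least $k'$ strings and put $p=\per(w)$, so $2p\le|w|$. I would first recall the standard run property that each occurrence of $w$ sits inside a unique run of the same smallest period $p$: were it inside a run of smaller smallest period $q<p$, then $w$, as a factor of that run of length $\ge 2p>2q$, would admit $q$ as a period, contradicting minimality of $p$. Now suppose for contradiction that $w$ is a prefix of no run. Then every occurrence of $w$, say at position $i$ of $x_j$ inside its run $[\ell,r]$, has $i>\ell$, and the period relation gives $x_j[i-1]=x_j[i+p-1]=w[p-1]$ (here $i-1\ge\ell$ and $i-1+p\le r$ both hold since $|w|\ge 2p$). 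Hence the longer string $w[p-1]\,w$ occurs at position $i-1$ wherever $w$ occurs, so it is common to the same $\ge k'$ strings; it has period $p$ and length $|w|+1>2p$, so it is periodic. This contradicts the maximality of $w$, forcing $w$ to be a prefix of some run $[\ell,r]$ with $2p\le|w|\le r-\ell+1$.

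From the lemma the correctness follows cleanly. If $w$ is the whole run $x_j[\ell..r]$, Step~3 explicitly creates its node via the weighted-ancestor query and marks it as a candidate. If $w$ is a proper prefix of the run, I claim its locus in \GST is explicit: otherwise there is no branching at $w$, so the one-letter extension $x_j[\ell..\ell+|w|]$ (still a run prefix, hence periodic of period $p$, and of length $\le r-\ell+1$) has the same occurrence set as $w$, giving a strictly longer periodic factor common to $\ge k'$ strings, contradicting maximality. An explicit node spelling a run prefix of length $\ge 2p$ is marked by Step~3, and it is \emph{good} because $w$ occurs in $\ge k'$ strings; thus the algorithm returns a value $\ge|w|$. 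For soundness, any node that is simultaneously a candidate and good spells a run prefix of length at least twice its period (hence periodic) occurring in at least $k'$ strings, so its string depth cannot exceed $|w|=\LCPF$. The two bounds give equality.

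For the running time I would walk through the steps of algorithm {\sc lPcf}. Computing all runs of all strings costs $\cO(N)$ and yields $\cO(N)$ runs in total; building \GST over the integer alphabet is $\cO(N)$. Step~3 issues one weighted-ancestor query per run, i.e.\ $\cO(N)$ queries, answered off-line in $\cO(N)$ total time, and the augmented tree still has $\cO(N)$ nodes, so the colour-set-size computation marking the good nodes in Step~4 is also $\cO(N)$. The one delicate point is to realise Steps~3 and~5 without enumerating every periodic prefix (which could be superlinear). Here I would exploit that the good nodes are closed under taking ancestors (a node's colour set contains those of its descendants), so along each run's root-to-leaf path the good nodes form an upward-closed prefix. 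It therefore suffices, for each run node at depth $r-\ell+1$, to read off the depth $d_g$ of its deepest good ancestor-or-self — precomputed for all nodes by a single $\cO(N)$ tree traversal — and, whenever $d_g\ge 2p$, to treat $d_g$ as the contribution of that run; the maximum over all runs gives \LCPF. Since this contribution always corresponds to a genuine candidate-and-good node, it never overshoots the true optimum, and by the argument above the optimal run attains it. The whole procedure is thus $\cO(N)$.
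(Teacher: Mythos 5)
Your proof is correct and rests on the same combinatorial core as the paper's: the fact that a longest periodic common factor $w$ must be a prefix of some run, established by the same extension argument (if every occurrence of $w$ began strictly inside its run, periodicity would force the same letter $w[p-1]$ to precede every occurrence, yielding a longer periodic common factor). You are, however, more careful than the paper in two places. First, the paper simply asserts that the path spelling $w$ leads to a node $n_w$ that is both good and a candidate; since Step~3 only marks \emph{explicit} nodes, this tacitly requires the locus of $w$ to be explicit, and your branching argument (an implicit locus would force a one-letter right-extension with the same occurrence set, again contradicting maximality) supplies exactly the missing justification. Second, you spell out the soundness direction --- that no candidate-and-good node can have string depth exceeding $|w|$ --- which the paper leaves implicit in the phrase ``deepest such node.'' Your reorganisation of Steps~3 and~5 around the deepest good ancestor-or-self of each run node is a legitimate $\cO(N)$ realisation and essentially coincides with the paper's second algorithm (the one that dispenses with weighted ancestor queries); note, though, that the paper's own Step~3 is not superlinear either, since it propagates the minimal $2p_\ell$ over descendant runs upward in a single post-order traversal rather than enumerating every periodic prefix.
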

\begin{proof}
Let us assume wlog that $k'=k$, and let $w$ with period $p$ be the longest periodic factor common to all strings. By the construction of \aGST (Steps 1-4), the path spelling $w$ leads to a good node $n_w$ as $w$ occurs in all the strings. We make the following observation. 

\begin{observation}
\label{obs:run-of-a-period}
Each periodic factor with period $p$ of string $x$ is a factor of $x[i..j]$, where $[i,j]$ is a run with period $p$. 
\end{observation}

By Observation~\ref{obs:run-of-a-period}, in all strings, $w$ is included in a run having the same period. Observe that for at least one of the strings, there is a run ending with $w$, otherwise we could extend $w$ obtaining a longer periodic common factor (similarly, for at least one of the strings, there is a run starting with $w$). Therefore $n_w$ is {\em both} a good and a candidate node. By definition, $n_w$ is at string depth at least $2p$ and, by construction, \LCPF is the string depth of a deepest such node; thus $|w|$ will be returned by Step~5. 

As for the time complexity, Step 1~\cite{KolpakovKucherov1999,BannaiTomohiroInenagaNakashimaTakedaTsuruta2017} and Step 2~\cite{farach1997optimal} can be done in $\cO(N)$ time. Since the total number of runs is less than $N$~\cite{BannaiTomohiroInenagaNakashimaTakedaTsuruta2017}, Step 3 can be done in $\cO(N)$ time using off-line weighted ancestor queries~\cite{DBLP:journals/corr/BartonKLPR17} to mark the runs as candidate nodes; and then a post-order traversal to mark their ancestor explicit nodes as candidates, if their string-depth is at least $2p_\ell$ for any run $[\ell,r]$ with period $p_\ell$. The size of the \aGST is still in $\cO(N)$. Step 4 can be done in $\cO(N)$ time~\cite{ChiHui1992}. Step 5 can be done in $\cO(N)$ by a post-order traversal of \aGST. 
\end{proof}

The following example provides a complete overview of the workings of our algorithm.

\begin{example}
Consider $x=$\texttt{ababbabba}, $y=$\texttt{ababaab}, and $k\!=\!k'\!=\!2$. The runs of $x$ are: $r_0=[0,3]$, $\per(\texttt{abab})=2$,  $r_1=[1,8]$, $\per(\texttt{babbabba})=3$, $r_2=[3,4]$, $\per(\texttt{bb})=1$, and $r_3=[6,7]$, $\per(\texttt{bb})=1$; those of $y$ are $r_4=[0,4]$, $\per(\texttt{ababa})=2$ and $r_5=[4,5]$, $\per(\texttt{aa})=1$. Fig~\ref{fig:giulia} shows \aGST for $x$, $y$, and $k\!=\!k'\!=\!2$. Algorithm {\sc lPcf} outputs $4=|\texttt{abab}|$, with $\per(\texttt{abab})=2$, as the node spelling \texttt{abab} is the deepest good one that is also a candidate.
\end{example}

\begin{figure}
 \centering
\includegraphics[width=\textwidth,height=6cm]{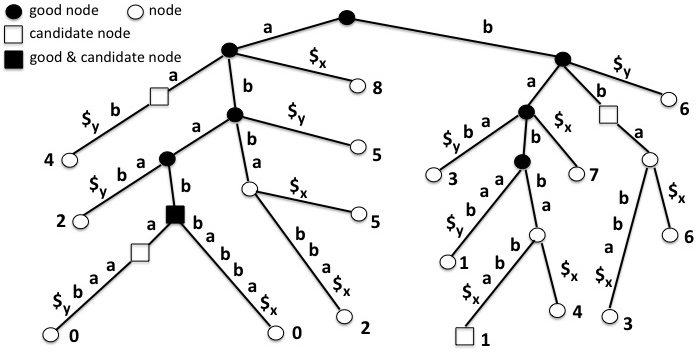}
\caption{\aGST for $x=\texttt{ababbabba}$, $y=$\texttt{ababaab}, and $k\!=\!k'\!=\!2$.}
\label{fig:giulia}
\end{figure}

We next present a second algorithm to solve this problem with the same time complexity but without the use of off-line weighted ancestor queries. The algorithm works as follows.

\begin{enumerate}
\item Compute the runs of string $x_j$, for all $0 \le j < k$.
\item Construct the generalised suffix tree $\GST(x_0,x_1,\ldots,x_{k-1})$ of $x_0,x_1,\ldots,x_{k-1}$. 
\item Mark as \emph{good} the nodes of $\GST$ having at least $k'$ different colours on the leaves of the subtree rooted there. 
\item Compute and store, for every leaf node, the \emph{nearest} ancestor that is good.
\item For each string $x_j$ and for each run $[\ell,r]$ with period $p_\ell$ of $x_j$, check the nearest good ancestor for the leaf corresponding to $x_j[\ell..|x_j|-1]$. 
Let $d$ be the string-depth of the nearest good ancestor. Then:
\begin{enumerate}
 \item If $r - \ell + 1 \leq d$, the entire run is also good.
 \item If $r - \ell + 1 > d$, check if $2p_\ell \leq d$, and if so the string for the good ancestor is periodic.
\end{enumerate}
 \item Return as \LCPF the maximal string depth found in Step 5 (if any, otherwise return 0). 
\end{enumerate}

\begin{figure}
 \centering
\includegraphics[width=\textwidth,height=9cm]{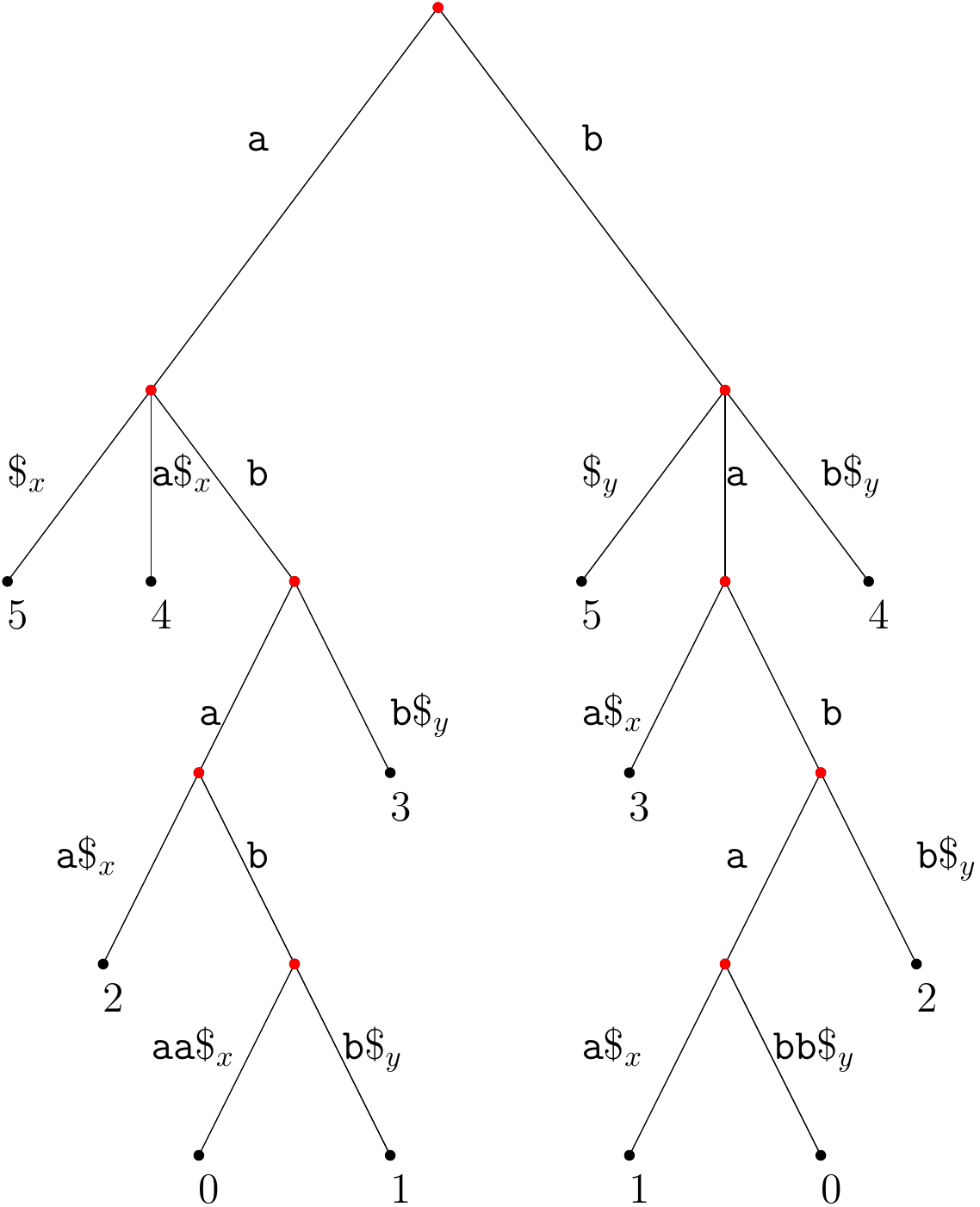}
\caption{\GST for $x=\texttt{ababaa}$, $y=\texttt{bababb}$, and $k\!=\!k'\!=\!2$. Good nodes are marked red.}
\label{fig:giulia2}
\end{figure}

Let us analyse this algorithm. Let us assume wlog that $k'=k$, and let $w$ with period $p$ be the longest periodic factor common to all strings. By the construction of $\GST$ (Steps 1-3), the path spelling $w$ leads to a good node $n_w$ as $w$ occurs in all the strings. 

By Observation~\ref{obs:run-of-a-period}, in all strings, $w$ is included in a run having the same period. 
Observe that for at least one of the strings, there is a run starting with $w$, otherwise we could extend $w$ obtaining a longer periodic common factor. 
So the algorithm should check, for each run, if there is a periodic-preserved common prefix of the run and take the longest such prefix.
\LCPF is the string depth of a deepest good node spelling a periodic factor; thus $|w|$ will be returned by Step~6. 

As for the time complexity, Step 1~\cite{KolpakovKucherov1999,BannaiTomohiroInenagaNakashimaTakedaTsuruta2017} and Step 2~\cite{farach1997optimal} can be done in $\cO(N)$ time. 
Step 3 can be done in $\cO(N)$ time~\cite{ChiHui1992} and Step 4 can be done in $\cO(N)$ time by using a tree traversal.
Since the total number of runs is less than $N$~\cite{BannaiTomohiroInenagaNakashimaTakedaTsuruta2017}, 
Step 5 can be done in $\cO(N)$ time. We thus arrive at Theorem~\ref{the:LCPF} with a different algorithm.

The following example provides a complete overview of the workings of our algorithm.

\begin{example}
Consider $x=$\texttt{ababaa}, $y=$\texttt{bababb}, and $k\!=\!k'\!=\!2$. 
The runs of $x$ are: $r_0=[0,4]$, $\per(\texttt{ababa})=2$,  $r_1=[4,5]$, $\per(\texttt{aa})=1$; those of $y$ are $r_2=[0,4]$, $\per(\texttt{babab})=2$ and $r_3=[4,5]$, $\per(\texttt{bb})=1$. 
Fig~\ref{fig:giulia2} shows $\GST$ for $x$, $y$, and $k\!=\!k'\!=\!2$. 
Consider the run $r_0=[0,4]$. The nearest good node of leaf spelling $x[0..|x|-1]$ is the node spelling $\texttt{abab}$. We have that $r - \ell + 1 = 5 > d = 4$, and $2p = 4 \leq d = 4$.
The algorithm outputs $4=|\texttt{abab}|$ as \texttt{abab} is a longest periodic-preserved common factor. Another longest periodic-preserved common factor is \texttt{baba}.
\end{example}

\section{Longest Palindromic-Preserved Common Factor}

In this section, we introduce the longest palindromic-preserved common factor problem and provide a linear-time solution. In the {\em longest palindromic-preserved common factor} problem, we are given two strings $x$ and $y$, and we are asked to find a longest palindromic factor common to the two strings. (For related work in a dynamic setting see \cite{funakoshi_et_al:LIPIcs:2018:8697,DBLP:journals/corr/abs-1804-08731}.) We represent the answer \LPal by the length of a longest factor, but we can trivially modify our algorithm to report an actual factor. Our algorithm is denoted by {\sc lPalcf}. In the description below, for clarity, we consider odd-length palindromes only. 
(Even-length palindromes can be handled in an analogous manner.)

\begin{enumerate}
\item Compute the maximal odd-length palindromes of $x$ and the maximal odd-length palindromes of $y$.
\item Collect the factors $x[i..i']$ of $x$ (resp.~the factors $y[j..j']$ of $y$) such that $i$ ($j$) is the center of an odd-length maximal palindrome of $x$ ($y$) and 
$i'$ ($j'$) is the ending position of the odd-length maximal palindrome centered at $i$ ($j$).  
\item Create a lexicographically sorted list $L$ of these strings from $x$ and $y$. 
\item Compute the longest common prefix of consecutive entries (strings) in $L$.
\item Let $\ell$ be the maximal length of longest common prefixes between any string from $x$ and any string from $y$. For odd lengths, return $\LPal = 2\ell-1$. 
\end{enumerate}

\begin{theorem}
Given two strings $x$ and $y$ on alphabet $\Sigma = \{1,\ldots,(|x|+|y|)^{\cO(1)}\}$, algorithm {\sc lPalcf} returns \LPal in time $\cO(|x|+|y|)$.
\end{theorem}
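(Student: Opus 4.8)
The plan is to prove correctness and the time bound for algorithm {\sc lPalcf} by exploiting a well-known characterization of palindromic factors via maximal palindromes. First I would establish the key structural fact that underlies Steps 1--2: every odd-length palindromic factor of $x$ centered at some position $i$ is a factor (indeed, a ``sub-palindrome'' sharing the same center) of the \emph{maximal} odd-length palindrome centered at $i$. Concretely, if $x[i-t..i+t]$ is a palindrome, then it is obtained from the maximal palindrome centered at $i$, say $x[i-r+1..i+r-1]$ of radius $r$, by peeling off equally many letters from both ends; consequently its right half $x[i..i+t]$ is exactly a prefix of the string $x[i..i']$ collected in Step~2 (where $i'=i+r-1$). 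This means that the set of right-halves of odd palindromes centered at $i$ is precisely the set of prefixes of $x[i..i']$, and there are only $2n-1$ maximal palindromes in a string of length $n$ (as stated in the Definitions), so Steps~1--2 produce $\cO(|x|+|y|)$ strings in total.

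Next I would argue the reduction to a longest-common-prefix computation. An odd-length palindrome of radius $\ell$ is common to $x$ and $y$ if and only if there is a center $i$ in $x$ and a center $j$ in $y$ such that the length-$\ell$ prefix of $x[i..i']$ equals the length-$\ell$ prefix of $y[j..j']$: the shared prefix spells the right half (including the center letter) of a common odd palindrome, and reflecting it yields the full common palindrome of length $2\ell-1$. Therefore the longest odd palindrome common to both strings has half-length equal to the maximum length of a common prefix taken over one collected string from $x$ and one collected string from $y$. This is exactly the quantity $\ell$ computed in Steps~3--5, and the returned value $2\ell-1$ is the correct length. I would make the ``one from $x$, one from $y$'' restriction explicit, since a longest common prefix between two strings both originating from $x$ does not certify a \emph{common} palindrome.

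For the time complexity I would account for each step in turn. Step~1 computes all maximal palindromes in $\cO(|x|+|y|)$ time; although the excerpt does not cite it by name, I would invoke the standard linear-time algorithm (Manacher's algorithm) for this, or equivalently note it follows from the linear-time construction of the eertree / a suffix-tree-with-LCA approach. Step~2 merely records the pair $(i,i')$ for each of the $\cO(|x|+|y|)$ centers, so it is linear. The main obstacle is Steps~3--4: naive lexicographic sorting of up to $\cO(|x|+|y|)$ strings, each possibly of length $\Theta(|x|+|y|)$, is far too slow if done by character comparison. The clean way to handle this within the stated integer-alphabet budget $\Sigma=\{1,\dots,(|x|+|y|)^{\cO(1)}\}$ is to observe that every collected string $x[i..i']$ is a \emph{suffix-prefix}, i.e. a prefix of a suffix of $x$ (resp.~$y$); hence each corresponds to a node (explicit or implicit) of the generalised suffix tree $\GST(x,y)$, locatable via a weighted ancestor query as described in the Algorithmic Toolbox. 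On $\GST(x,y)$, lexicographic order of these strings is given by the order in which their loci are visited in a depth-first traversal, and longest common prefixes of sorted neighbors are string-depths of lowest common ancestors; all of this is obtainable in $\cO(|x|+|y|)$ time after a linear-time preprocessing for LCA queries. Thus Steps~3--5 run in linear time, and combining the steps gives the claimed $\cO(|x|+|y|)$ bound. The delicate point to get exactly right in the final write-up is ensuring that the maximum of Step~5 ranges only over cross pairs (one string from $x$, one from $y$), which the suffix-tree formulation handles by coloring each locus with its source string and taking, at each node, the best common-prefix length between a descendant of color $x$ and one of color $y$.
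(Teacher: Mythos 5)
Your proposal is correct and follows essentially the same route as the paper: the correctness argument is exactly the paper's Observation~\ref{obs:pal} (every palindromic factor is a same-center ``peeling'' of a maximal palindrome, so it suffices to compare right-halves of the collected strings), and the step-by-step linear-time accounting matches. The only difference is cosmetic: the paper delegates the sorting and longest-common-prefix computation of Steps~3--4 to a cited black-box data structure, whereas you spell out an equivalent construction via the generalised suffix tree with DFS order and LCA queries, and you make explicit the (valid) point that the maximum in Step~5 must be taken over cross pairs only.
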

\begin{proof}

The correctness of our algorithm follows directly from the following observation.

\begin{observation}\label{obs:pal}
Any longest palindromic-preserved common factor is a factor of a maximal palindrome of $x$ with the same center and a factor of a maximal palindrome of $y$ with the same center.
\end{observation}

Step 1 can be done in $\cO(|x|+|y|)$ time~\cite{DBLP:books/cu/Gusfield1997}.
Step 2 can be done in $\cO(|x|+|y|)$ time by going through the set of maximal palindromes computed in Step 1.
Step 3 and Step 4 can be done in $\cO(|x|+|y|)$ time by constructing the data structure of~\cite{DBLP:conf/latin/Charalampopoulos18}.
Step 5 can be done in $\cO(|x|+|y|)$ time by going through the list of computed longest common prefixes.
 
\end{proof}

The following example provides a complete overview of the workings of our algorithm.

\begin{example}
Consider $x=$\texttt{ababaa} and $y=$\texttt{bababb}. 
In Step 1 we compute all maximal palindromes of $x$ and $y$.
Considering odd-length palindromes gives the following factors (Step 2) from $x$: 
$x[0..0]=\texttt{a}$, 
$x[1..2]=\texttt{\textcolor{red}{ba}}$,
$x[2..4]=\texttt{\textcolor{blue}{ab}a}$,
$x[3..4]=\texttt{\textcolor{red}{ba}}$,
$x[4..4]=\texttt{a}$, and
$x[5..5]=\texttt{a}$.
The analogous factors from $y$ are: 
$y[0..0]=\texttt{b}$, 
$y[1..2]=\texttt{\textcolor{blue}{ab}}$,
$y[2..4]=\texttt{\textcolor{red}{ba}b}$,
$y[3..4]=\texttt{\textcolor{blue}{ab}}$,
$y[4..4]=\texttt{b}$, and
$y[5..5]=\texttt{b}$.
We sort these strings lexicographically and compute the longest common prefix information (Steps 3-4). We find that $\ell=2$: the maximal longest common prefixes are $\texttt{\textcolor{red}{ba}}$ and $\texttt{\textcolor{blue}{ab}}$, denoting that \texttt{\textcolor{red}{aba}} and \texttt{\textcolor{blue}{bab}} are the longest palindromic-preserved common factors of odd length. In fact, algorithm {\sc lPalcf} outputs $2\ell-1=3$ as \texttt{aba} and \texttt{bab} are the longest palindromic-preserved common factors of any length.
\end{example}

\section{Final Remarks}

In this paper, we introduced a new family of string processing problems.
The goal is to compute factors common to a set of strings preserving a specific property and having maximal length. We showed linear-time algorithms for square-free, periodic, and  palindromic factors under three different settings.
We anticipate that our paradigm can be extended to other string properties or settings.

\section*{Acknowledgements}
We would like to acknowledge an anonymous reviewer of a previous version of this paper who suggested the second linear-time algorithm for computing the longest periodic-preserved common factor.
Solon P.~Pissis and Giovanna Rosone are partially supported by the Royal Society project IE 161274 ``Processing uncertain sequences: combinatorics and applications''.
Giovanna Rosone and Nadia Pisanti are partially supported by the project Italian MIUR-SIR CMACBioSeq (``Combinatorial methods for analysis and compression of biological sequences'') grant n.~RBSI146R5L.

\bibliographystyle{plain}
\bibliography{references.bib}

\begin{thebibliography}{10}

\bibitem{DBLP:journals/corr/abs-1804-08731}
Amihood Amir, Panagiotis Charalampopoulos, Solon~P. Pissis, and Jakub
  Radoszewski.
\newblock Longest common factor made fully dynamic.
\newblock {\em CoRR}, abs/1804.08731, 2018.

\bibitem{DBLP:journals/corr/abs-1801-04425}
Lorraine A.~K. Ayad, Carl Barton, Panagiotis Charalampopoulos, Costas~S.
  Iliopoulos, and Solon~P. Pissis.
\newblock Longest common prefixes with $k$-errors and applications.
\newblock In {\em {SPIRE}}, volume 11147 of {\em LNCS}, pages 27--41. Springer,
  2018.

\bibitem{BAE}
Sang~Won Bae and Inbok Lee.
\newblock On finding a longest common palindromic subsequence.
\newblock {\em Theoretical Computer Science}, 710:29--34, 2018.
\newblock Advances in Algorithms \& Combinatorics on Strings (Honoring 60th
  birthday for Prof. Costas S. Iliopoulos).

\bibitem{BannaiTomohiroInenagaNakashimaTakedaTsuruta2017}
Hideo Bannai, Tomohiro I, Shunsuke Inenaga, Yuto Nakashima, Masayuki Takeda,
  and Kazuya Tsuruta.
\newblock The ``runs'' theorem.
\newblock {\em SIAM Journal on Computing}, 46(5):1501--1514, 2017.

\bibitem{DBLP:journals/corr/BartonKLPR17}
Carl Barton, Tomasz Kociumaka, Chang Liu, Solon~P. Pissis, and Jakub
  Radoszewski.
\newblock Indexing weighted sequences: Neat and efficient.
\newblock {\em CoRR}, abs/1704.07625, 2017.

\bibitem{DBLP:conf/spire/BelazzouguiC14}
Djamal Belazzougui and Fabio Cunial.
\newblock Indexed matching statistics and shortest unique substrings.
\newblock In Edleno~Silva de~Moura and Maxime Crochemore, editors, {\em 21st
  International Symposium on String Processing and Information Retrieval
  (SPIRE)}, volume 8799 of {\em LNCS}, pages 179--190, 2014.

\bibitem{Chang1994}
W.~I. Chang and E.~L. Lawler.
\newblock Sublinear approximate string matching and biological applications.
\newblock {\em Algorithmica}, 12(4):327--344, 1994.

\bibitem{DBLP:conf/cpm/Charalampopoulos18}
Panagiotis Charalampopoulos, Maxime Crochemore, Costas~S. Iliopoulos, Tomasz
  Kociumaka, Solon~P. Pissis, Jakub Radoszewski, Wojciech Rytter, and Tomasz
  Walen.
\newblock Linear-time algorithm for long {LCF} with k mismatches.
\newblock In {\em {CPM}}, volume 105 of {\em LIPIcs}, pages 23:1--23:16.
  Schloss Dagstuhl - Leibniz-Zentrum fuer Informatik, 2018.

\bibitem{DBLP:conf/latin/Charalampopoulos18}
Panagiotis Charalampopoulos, Costas~S. Iliopoulos, Chang Liu, and Solon~P.
  Pissis.
\newblock Property suffix array with applications.
\newblock In Michael~A. Bender, Martin Farach{-}Colton, and Miguel~A. Mosteiro,
  editors, {\em {LATIN} 2018: Theoretical Informatics - 13th Latin American
  Symposium, Buenos Aires, Argentina, April 16-19, 2018, Proceedings}, volume
  10807 of {\em Lecture Notes in Computer Science}, pages 290--302. Springer,
  2018.

\bibitem{ChiHui1992}
Lucas Chi and Kwong Hui.
\newblock Color set size problem with applications to string matching.
\newblock In {\em Combinatorial Pattern Matching}, pages 230--243. Springer
  Berlin Heidelberg, 1992.

\bibitem{Chowdhury}
Shihabur~Rahman Chowdhury, Md.~Mahbubul Hasan, Sumaiya Iqbal, and M.~Sohel
  Rahman.
\newblock Computing a longest common palindromic subsequence.
\newblock {\em Fundam. Inf.}, 129(4):329--340, 2014.

\bibitem{DBLP:conf/spire/DumitranMN15}
Marius Dumitran, Florin Manea, and Dirk Nowotka.
\newblock On prefix/suffix-square free words.
\newblock In Costas~S. Iliopoulos, Simon~J. Puglisi, and Emine Yilmaz, editors,
  {\em 22nd International Symposium, on String Processing and Information
  Retrieval (SPIRE)}, volume 9309 of {\em LNCS}, pages 54--66, 2015.

\bibitem{DUVAL2004229}
Jean-Pierre Duval, Roman Kolpakov, Gregory Kucherov, Thierry Lecroq, and Arnaud
  Lefebvre.
\newblock Linear-time computation of local periods.
\newblock {\em Theoretical Computer Science}, 326(1):229--240, 2004.

\bibitem{farach1997optimal}
Martin Farach.
\newblock Optimal suffix tree construction with large alphabets.
\newblock In {\em 38th Annual Symposium on Foundations of Computer Science
  (FOCS)}, pages 137--143, 1997.

\bibitem{DBLP:conf/cpm/FarachM96}
Martin Farach and S.~Muthukrishnan.
\newblock Perfect hashing for strings: Formalization and algorithms.
\newblock In {\em 7th Symposium on Combinatorial Pattern Matching (CPM)}, pages
  130--140. 1996.

\bibitem{tcs09}
Maria Federico and Nadia Pisanti.
\newblock Suffix tree characterization of maximal motifs in biological
  sequences.
\newblock {\em Theor. Comput. Sci.}, 410(43):4391--4401, 2009.

\bibitem{funakoshi_et_al:LIPIcs:2018:8697}
Mitsuru Funakoshi, Yuto Nakashima, Shunsuke Inenaga, Hideo Bannai, and Masayuki
  Takeda.
\newblock {Longest substring palindrome after edit}.
\newblock In Gonzalo Navarro, David Sankoff, and Binhai Zhu, editors, {\em
  Annual Symposium on Combinatorial Pattern Matching (CPM 2018)}, volume 105 of
  {\em Leibniz International Proceedings in Informatics (LIPIcs)}, pages
  12:1--12:14, Dagstuhl, Germany, 2018. Schloss Dagstuhl--Leibniz-Zentrum fuer
  Informatik.

\bibitem{DBLP:books/cu/Gusfield1997}
Dan Gusfield.
\newblock {\em Algorithms on Strings, Trees, and Sequences - Computer Science
  and Computational Biology}.
\newblock Cambridge University Press, 1997.

\bibitem{INENAGA}
Shunsuke Inenaga and Heikki Hyyr{ö}.
\newblock A hardness result and new algorithm for the longest common
  palindromic subsequence problem.
\newblock {\em Information Processing Letters}, 129:11--15, 2018.

\bibitem{inoue_et_al}
Takafumi Inoue, Shunsuke Inenaga, Heikki Hyyr{\"o}, Hideo Bannai, and Masayuki
  Takeda.
\newblock {Computing longest common square subsequences}.
\newblock In {\em 29th Symposium on Combinatorial Pattern Matching (CPM)},
  volume 105 of {\em LIPIcs}, pages 15:1--15:13, 2018.

\bibitem{DBLP:conf/esa/KociumakaSV14}
Tomasz Kociumaka, Tatiana~A. Starikovskaya, and Hjalte~Wedel Vildh{\o}j.
\newblock Sublinear space algorithms for the longest common substring problem.
\newblock In {\em Algorithms - {ESA} 2014 - 22th Annual European Symposium,
  Wroclaw, Poland, September 8-10, 2014. Proceedings}, pages 605--617, 2014.

\bibitem{KolpakovKucherov1999}
Roman Kolpakov and Gregory Kucherov.
\newblock Finding maximal repetitions in a word in linear time.
\newblock In {\em 40th Symposium on Foundations of Comp Science}, pages
  596--604, 1999.

\bibitem{lothaire_2005}
M.~Lothaire.
\newblock {\em Applied Combinatorics on Words}.
\newblock Encyclopedia of Mathematics and its Applications. Cambridge
  University Press, 2005.

\bibitem{jda08}
Pierre Peterlongo, Nadia Pisanti, Fr{\'{e}}d{\'{e}}ric Boyer, Alair~Pereira
  do~Lago, and Marie{-}France Sagot.
\newblock Lossless filter for multiple repetitions with hamming distance.
\newblock {\em J. Discr. Alg.}, 6(3):497--509, 2008.

\bibitem{MaxHamming}
Pierre Peterlongo, Nadia Pisanti, Fr{\'{e}}d{\'{e}}ric Boyer, and
  Marie{-}France Sagot.
\newblock Lossless filter for finding long multiple approximate repetitions
  using a new data structure, the bi-factor array.
\newblock In {\em 12th International Symposium String Processing and
  Information Retrieval, 12th International Conference (SPIRE)}, pages
  179--190, 2005.

\bibitem{DBLP:conf/cpm/StarikovskayaV13}
Tatiana~A. Starikovskaya and Hjalte~Wedel Vildh{\o}j.
\newblock Time-space trade-offs for the longest common substring problem.
\newblock In {\em 24th Symposium on Combinatorial Pattern Matching (CPM)},
  pages 223--234, 2013.

\bibitem{DBLP:conf/recomb/ThankachanACA18}
Sharma~V. Thankachan, Chaitanya Aluru, Sriram~P. Chockalingam, and Srinivas
  Aluru.
\newblock Algorithmic framework for approximate matching under bounded edits
  with applications to sequence analysis.
\newblock In {\em {RECOMB}}, volume 10812 of {\em LNCS}, pages 211--224, 2018.

\bibitem{DBLP:journals/jcb/ThankachanAA16}
Sharma~V. Thankachan, Alberto Apostolico, and Srinivas Aluru.
\newblock A provably efficient algorithm for the \emph{k}-mismatch average
  common substring problem.
\newblock {\em Journal of Computational Biology}, 23(6):472--482, 2016.

\end{thebibliography}

\end{document}